\documentclass{article}

\usepackage{amssymb}
\usepackage[italicdiff]{physics}
\usepackage{mathtools}
\mathtoolsset{showonlyrefs}
\usepackage{listings}
\usepackage{refcount}
\usepackage[shortlabels]{enumitem}
\usepackage{amsthm}
\usepackage[noend]{algorithmic}
\usepackage[hyphens]{url}

\let\mc\mathcal
\let\eps\varepsilon
\let\op\operatorname

\newcommand{\R}{\mathbb R}

\newcommand{\LP}{{\op{LP}}}

\let\norm\undefined
\DeclarePairedDelimiter{\norm}{\lVert}{\rVert}

\DeclareMathOperator*{\argmin}{argmin}

\lstset{
	basicstyle=\footnotesize\ttfamily, 
	xleftmargin=2em, 
	tabsize=2, 
	mathescape=true,
	numbers=left,
	numberstyle=\tiny,
	commentstyle=\normalfont\footnotesize\it,
	morecomment=[l]\#,
	columns=fullflexible,
	numberblanklines=false,
}

\newtheorem{theorem}{Theorem}
\newtheorem*{theorem*}{Theorem}
\newtheorem*{lemma*}{Lemma}

\theoremstyle{definition}

\newtheorem{exampl}{Example}

\usepackage{microtype}
\usepackage{graphicx}
\usepackage{subfigure}
\usepackage{booktabs} %

\usepackage[pdfusetitle]{hyperref}

\usepackage[accepted]{icml2020}

\icmltitlerunning{Sparsified Linear Programming for Zero-Sum Equilibrium Finding}

\begin{document}
	
	\twocolumn[
	\icmltitle{Sparsified Linear Programming for Zero-Sum Equilibrium Finding}

	\icmlsetsymbol{equal}{*}
	
	\begin{icmlauthorlist}
		\icmlauthor{Brian Hu Zhang}{cmu}
		\icmlauthor{Tuomas Sandholm}{cmu,sm,sr,om}
	\end{icmlauthorlist}
	
	\icmlaffiliation{cmu}{Department of Computer Science, Carnegie Mellon University, Pittsburgh, PA, USA}
	\icmlaffiliation{sm}{Strategic Machine, Inc.}
	\icmlaffiliation{sr}{Strategy Robot, Inc.}
	\icmlaffiliation{om}{Optimized Markets, Inc.}
	
	\icmlcorrespondingauthor{Brian Hu Zhang}{bhzhang@cs.cmu.edu}
	\icmlcorrespondingauthor{Tuomas Sandholm}{sandholm@cs.cmu.edu}
	
	\icmlkeywords{Game Theory and Mechanism Design}
	
	\vskip 0.3in
	]

	\printAffiliationsAndNotice{}  %
	
	\begin{abstract}
		Computational equilibrium finding in large zero-sum extensive-form imperfect-information games has led to significant recent AI breakthroughs. The fastest algorithms for the problem are new forms of counterfactual regret minimization~\cite{bs19}. In this paper we present a totally different approach to the problem, which is competitive and often orders of magnitude better than the prior state of the art. The equilibrium-finding problem can be formulated as a linear program (LP)~\cite{kmv94}, but solving it as an LP has not been scalable due to the memory requirements of LP solvers, which can often be quadratically worse than CFR-based algorithms. We give an efficient practical algorithm that factors a large payoff matrix into a product of two matrices that are typically dramatically sparser. This allows us to express the equilibrium-finding problem as a linear program with size only a logarithmic factor worse than CFR, and thus allows linear program solvers to run on such games. With experiments on poker endgames, we demonstrate in practice, for the first time, that modern linear program solvers are competitive against even game-specific modern variants of CFR in solving large extensive-form games, and can be used to compute exact solutions unlike iterative algorithms like CFR.
	\end{abstract}
	
	\section{Introduction}
	Imperfect-information games model strategic interactions between agents that do not have perfect knowledge of their current situation, such as  auctions, negotiations, and recreational games such as poker and battleship. 
	Linear programming (LP) can be used to solve---that is, to find a Nash equilibrium in---imperfect-information two-player zero-sum perfect-recall games~\cite{kmv94}. However, due mostly to memory usage (see e.g.,~\citealp{zinkevich07} or~\citealp{bs19}), it has generally been thought of as impractical for solving large games. Thus, a series of other techniques has been developed for solving such games. Most prominent among these is the {\it counterfactual regret minimization} (CFR) family of algorithms~\cite{zinkevich07}. These algorithms work by iteratively improving both player's strategies until their time averages converge to an equilibrium. They have a worst-case bound of $O(1/\eps^2)$ iterations needed to reach accuracy $\eps$, and more recent improvements, most notably {\it CFR+}~\cite{tammelin14} and {\it discounted CFR} (DCFR)~\cite{bs19} mean that algorithms from the CFR family remain the state of the art in practice for solving large games, and have been used as an important part of the computational pipelines to achieve superhuman performance in benchmark cases such as heads-up limit~\cite{bbjt17} and no-limit~\cite{bs18} Texas hold'em.
	
	Several families of algorithms have theoretically faster convergence rates than those of the CFR family. First-order methods~\cite{hgps10,kwks15} have a theoretically better convergence guarantee of $O(1/\eps)$ (or even $\log(1/\eps)$ with a condition number~\cite{Gilpin12:First}), but in practice perform worse than the newest algorithms in the CFR family~\cite{Kroer18:Solving,bs19}. Standard algorithms for LP are known that converge at rate $O(\log(1/\eps))$, but for the most part, these algorithms require storage of the whole payoff matrix explicitly, which the CFR family does not, and often require superlinear time per iteration (with respect to the number of nonzero entries of the LP constraint matrix), which is prohibitive when the game is extremely large. 
	
	In this paper we investigate how to reduce the space requirements of LP solvers by factoring the possibly dense payoff matrix of an extensive-form game. A long body of work investigates the problem of decomposing, or factoring, a given matrix $A$ as the product of other matrices, with some objective in mind. Studied objectives include the speedup of certain operations, as in the LU or Cholesky factorizations, and approximation of the matrix $A$ in a certain norm, as in the singular value decomposition (SVD). Our objective in this work is {\it sparsity}: we investigate the problem of factoring a matrix $A$ into the product of two matrices $U$ and $V$ that are much sparser than $A$. This differs from the usual low-rank approximation in that the optimization objective is different ($0$-norm, that is, number of non-zero entries, instead of the $2$-norm, that is, the square root of sum of squares), and that the matrices $U$ and $V$ might not be low rank (and in fact in poker they will high rank that is linear in the number of sequences in the game). 
	
	We are not aware of any prior application-independent work that addresses this problem. The SVD approximates $A$ in the wrong norm for this purpose: $2$-norm approximations will in the general case have fully dense residual, which is not desirable. The body of work on {\it sparse PCA} (e.g.,~\citealp{zx18}) focuses on {\it low-rank} sparse factorizations. That still mostly focuses on $2$-norm approximations, and the runtime of the algorithm usually scales with the rank of the factorization as well as the size of $A$. In our cases, an optimal factorization may have {\it high} or even full rank (and yet be sparse), and our matrices are large enough that quadratic (or worse) dependence on $\norm{A}_0$, which is often seen in these algorithms, is unacceptable. Our goal is to find such a factorization efficiently.~\citet{np13} address the same problem, but restrict their attention to matrices that are known {\it a priori} to be the product of sparse matrices with entries drawn independently from a nice distribution. This is not the case in our setting.~\citet{rov14} attack a related but still substantially different problem, of finding a sparse factorization when we know {\it a priori} a good bound on the sparsity of each row or column of the factors. This, too, is not true in our setting: no such bound may even exist, much less be known.
	
	Our main technical contribution is a novel practical matrix factorization algorithm that greatly reduces the size of the game payoff matrix in many cases. This matrix factorization allows LP algorithms to run in far less memory and time than previously known, bringing the memory requirement close to that of CFR. We demonstrate in practice that this method can reduce the size of a payoff matrix by multiple orders of magnitude, yielding improvements in both the time and space efficiency of solving the resulting LP. This makes our approach---automated matrix sparsification followed by LP---superior to domain-independent versions of the fastest CFR variant. If high accuracy is desired, our domain-independent approach in many cases outperforms even a highly customized poker-specific implementation of the fastest CFR variant~\citep{bs19}. 
	
	We show experiments with the primal simplex, dual simplex, and the barrier method as the LP solver. The barrier method runs in polynomial time but each iteration is heavy in terms of memory and time. For that reason, we present techniques that significantly speed up a recent $O(\log^2(1/\eps))$ LP algorithm~\citep{yen15} that has iteration time and memory linear in the number of nonzero entries in the constraint 
	matrix, and show experiments with that as well. Our experiments show interesting performance differences among the LP solvers as well.

	\section{Preliminaries}
	\noindent {\bf Extensive-form games. } We study the standard representation of games which can include sequential and simultaneous moves, as well as imperfect information, called an {\it extensive-form game}. It consists of the following.
	(1) A set of players $\mc P$, usually identified with positive integers. Random chance, or ``nature'' is also considered a player, and will be referred to as player 0.
	(2) A finite tree $H$ of {\it histories} or {\it nodes}, rooted at some {\it initial state} $\emptyset \in H$. Each node is labeled with the player (possibly nature) who acts at that node. The set of leaves, or {\it terminal states}, in $H$ will be denoted $Z$. The edges connecting any node $h \in H$ to its children are labeled with {\it actions}. 
	(3) For each player $i \in \mc P$, a {\it utility function} $u_i : Z \to \R$. 
	(4) For each player $i \in \mc P$, a partition of the nodes at which player $i$ acts into a collection $\mc I_i$ of {\it information sets}. In each information set $I \in \mc I_i$, every pair of nodes $h, h' \in I$ must have the same set of actions.
	(5) For each node $h$ at which nature acts, a distribution $\sigma_0(h)$ over the actions available that node.
	
	For any history $h \in H$ and any player $i \in \mc P$, the {\it sequence} $h[i]$ of player $i$ at node $h$ is the sequence of information sets reached and actions played by player $i$ on the path from the root node to $h$.  The set of sequences for player $i$ is denoted $S_i$. A player $i$ has {\it perfect recall} if $h[i] = h'[i]$ whenever $h$ and $h'$ are in the same information set $I \in \mc I_i$. In this work, we will focus our attention on two-player zero-sum games of perfect recall; i.e., games in which $\mc P = \{1, 2\}$, $u_1 = -u_2$, and both players have perfect recall. For simplicity of notation, the opponent of player $i$ will be denoted $-i$. 
	
	A {\it behavior strategy} (hereafter simply {\it strategy}) $\sigma_i$ for player $i$ is, for each information set $I \in J_i$ at which player $i$ acts, a distribution $\sigma_i(I)$ over the actions available at that infoset. When an agent reaches information set $I$, it chooses an action according to $\sigma_i(I)$. A pair $(\sigma_1, \sigma_2)$ of behavior strategies, one for each player, is a {\it strategy profile}. The {\it reach probability} $\pi^\sigma_i(h)$ is the probability that node $h$ will be reached, assuming that player $i$ plays according to strategy $\sigma_i$, and all other players (including nature) always choose actions leading to $h$ when possible. This definition extends to sets of nodes or to sequences by summing the reach probabilities.
	
	The {\it best response value} $\op{BRV}(\sigma_{-i})$ for player $i$ against an opponent strategy $\sigma_{-i}$ is the largest achievable value; i.e., in a two-player game, $\op{BRV}(\sigma_{-i}) = \max_{\sigma_i} u_i(\sigma_i, \sigma_{-i})$. A strategy $\sigma_i$ is an $\eps$-{\it  best response} to opponent strategy $\sigma_{-i}$ if $u_i(\sigma_i, \sigma_{-i}) \ge \op{BRV}(\sigma_{-i}) - \eps$. A strategy profile $\sigma$ is an $\eps$-{\it Nash equilibrium} if its {\it Nash gap} $\op{BRV}(\sigma_2) + \op{BRV}(\sigma_1)$ is at most $\eps$. {\it Best responses} and {\it Nash equilibria} are respectively $0$-best responses and $0$-Nash equilibria. The {\it exploitability} $\exp(\sigma_i)$ of a strategy is how far away $\sigma_i$ is away from a Nash equilibrium: $\exp(\sigma_i) = \op{BRV}(\sigma_i) - \op{BRV}(\sigma_i^*)$ where $\sigma_i^*$ is a Nash equilibrium strategy for the player. In a zero-sum game, the Nash value $\op{BRV}(\sigma_i^*)$ is the same for every Nash equilibrium strategy, so the exploitability is well-defined. 
	
	{\bf Equilibrium finding via linear programming.}
	Nash equilibrium finding in an extensive-form game can be cast as an LP in the following fashion~\citep{vonstengel96}. Consider mapping behavior strategies $\sigma_i$ to vectors $x \in \R^{S_i}$ by setting $x(s) = \pi^\sigma_i(s)$ for every sequence $s$. 
	We will refer to vector $x$ as a strategy. Under this framework, equilibrium finding can be cast as a bilinear saddle point problem
	\begin{align}
	\max_{x \ge 0} \min_{y \ge 0}\ & x^T A y \quad \text{s.t. } Bx = b,\quad Cy = c,\quad x, y \ge 0
	\end{align}
	where the matrices $B$ and $C$ satisfy $\norm{B}_0 = O(\abs{S_1})$, $\norm{C}_0 = O(\abs{S_2})$, and encode the constraints on the behavior strategies $x$ and $y$. $A$ is the payoff matrix whose $(i, j)$ entry is the expected payoff for Player 1 when Player 1 plays to reach sequence $i$ and Player 2 plays to reach sequence $j$: 
	$
	A = \sum_{z \in Z} \pi_0(z) u_1(z[1], z[2]) e_{z[1]} e_{z[2]}^T
	$
	where $e_i$ is the $i$th unit vector. The number of entries $\norm{A}_0 \le \abs{Z}$. Now taking the dual of the inner minimization yields the LP
	\begin{align}\label{eq:lp}
	\max_{x \ge 0,z} c^T z \qq{s.t.} \quad Bx = b,\quad C^T z \le A^T x.
	\end{align}
	Expressed in any LP standard form, the constraint matrix has $
	O(\abs{S_1} + \abs{S_2} + \abs{Z})
	$ nonzero entries in its constraint matrix. The LP can be solved with any standard solver.
	
	{\bf Sparse linear programming.} \citet{yen15} give a generic algorithm for solving LPs in the standard form\footnote{
		We use the subscript LP everywhere due to the clash of variable naming conventions between LP (where $A_\LP$ is the constraint matrix) and equilibrium finding (where $A$ is the payoff matrix).}
	\begin{align}
	\min_{ x_\LP \ge 0} \quad c^T_\LP x_\LP \qq{s.t.} A_\LP x_\LP \le b_\LP \label{eq:lp-standard}
	\end{align} 
	We first give a brief overview of the algorithm.
	We are interested in LPs of the standard form~\eqref{eq:lp-standard}
	and their duals
	\begin{equation}\label{eq:lp-dual}
	\min_{y_\LP\ge 0} \quad b_\LP^T y_\LP \qq{s.t.} {-}A_\LP^Ty_\LP \le c_\LP
	\end{equation}
	where $A \in \R^{m \times n}$. Consider the convex subproblem
	\begin{align}
	\min_{y_\LP\ge 0}\ & b_\LP^T y_\LP + \frac{1}{2\eta} \norm{y_\LP - \hat y}_2^2 \quad\text{s.t. } {-}A_\LP^Ty_\LP \le c_\LP
	\end{align}
	for some given initial solution $\hat y \in \R^m$ and real number $\eta > 0$. 
	The dual of this subproblem is
	\begin{align}
	\min_{x,z}\quad& c_\LP^T x_\LP + \frac{\eta}{2} \norm*{A_\LP x_\LP - b_\LP + z_\LP +\frac{1}{\eta} \hat y}_2^2 \\ \qq{s.t.} & x_\LP\ge 0, z_\LP\ge 0\label{eq:lp-subproblem-dual}
	\end{align}

	The approach is shown in Algorithm~\ref{alg:alm}. 
	In Line 2, the solution to Problem~\eqref{eq:lp-subproblem-dual} is computed via either a randomized coordinate descent (RC) or a projected Newton-CG (PG) algorithm; the details are not important here. The breakthrough of~\citet{yen15} is an implementation of these inner loops in $O(\norm{A_\LP}_0)$ time, rather than $O(mn)$ or worse. At each iteration, $x^*$ is infeasible in the original problem since the quadratic regularization term in~\eqref{eq:lp-subproblem-dual} does not punish slightly infeasible solutions much at all. $y^*$ is  infeasible since $(x^*, z^*)$ is a suboptimal solution to~\eqref{eq:lp-subproblem-dual}. Thus, Algorithm~\ref{alg:alm} works with infeasible solutions to the LP, which must be projected back into the feasible space.
	\begin{algorithm}[!ht]
		\caption{Augmented Lagrangian algorithm for solving linear programs \protect\cite{yen15}}\label{alg:alm}
		\textbf{Input:} initial dual solution guess $\hat y \in \R^m$, parameter $\eta > 0$
		
		\textbf{Output:} primal-dual solution pair $(x^*, \hat y)$
		
		\begin{algorithmic}[1]
			\LOOP
			\STATE let $(x^*, z^*)$ be an approximate solution to 
			\\\quad Problem~\eqref{eq:lp-subproblem-dual} given the current $\hat y$ and $\eta$.
			\STATE set $\hat y \gets \hat y + \eta (A_\LP x^* - b_\LP + z^*)$
			\STATE if necessary (as detailed by~\citet{yen15}),
			\\\quad increase $\eta$ by a constant factor
			\ENDLOOP
		\end{algorithmic}
	\end{algorithm}
	\begin{theorem}[Theorem 3 in~\citealp{yen15}]
		\label{thm:slp-main}
		After $O(\log(1/\eps))$ outer iterations of Algorithm~\ref{alg:alm}, each of which is run for $O(\log(1/\eps))$ inner iterations, we have $d(\hat y, S) \le \eps$ where $S\subseteq \R^m$ is the set of dual-optimal solutions and $d$ is Euclidean distance. 
	\end{theorem}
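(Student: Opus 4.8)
The plan is to recognize Algorithm~\ref{alg:alm} as an inexact \emph{proximal point method} applied to the dual LP~\eqref{eq:lp-dual}, and then to upgrade the generic sublinear rate of the proximal point method to a linear one by exploiting the piecewise-linear (polyhedral) structure of the problem. Concretely, write $g(y_\LP) = b_\LP^T y_\LP + \iota_Y(y_\LP)$ for the objective of~\eqref{eq:lp-dual}, where $Y = \{y_\LP \ge 0 : -A_\LP^T y_\LP \le c_\LP\}$ and $\iota_Y$ is its indicator. The inner problem solved in Line~2 is, by construction, the Fenchel dual of $\min_{y_\LP} g(y_\LP) + \frac{1}{2\eta}\norm{y_\LP - \hat y}_2^2$, i.e.\ of the evaluation of $\op{prox}_{\eta g}(\hat y)$; this is Rockafellar's classical observation that the method of multipliers is the proximal point algorithm on the dual, and a short calculation shows that from an \emph{exact} optimal $(x^*, z^*)$ of~\eqref{eq:lp-subproblem-dual} the update in Line~3 returns precisely $\hat y^{k+1} = \op{prox}_{\eta_k g}(\hat y^k)$. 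I would verify the sign and scaling conventions carefully here, since this is the bridge between~\eqref{eq:lp-subproblem-dual}, Line~3, and the proximal map.

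For the outer loop I would invoke a polyhedral error bound. Since $g$ is piecewise linear, its minimizer set $S$ is a nonempty polyhedron and $\partial g$ is a polyhedral multifunction whose inverse sends $0$ to $S$; by Robinson's polyhedral-multifunction theorem (equivalently, a Hoffman bound) there is a constant $\kappa$, depending only on $A_\LP, b_\LP, c_\LP$, with $d(y_\LP, S) \le \kappa\, d(0, \partial g(y_\LP))$ for $y_\LP$ near $S$. Feeding this into the proximal-point analysis of Rockafellar and Luque gives, for the exact iteration, $d(\hat y^{k+1}, S) \le \frac{\kappa}{\sqrt{\kappa^2 + \eta_k^2}}\, d(\hat y^k, S)$. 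Because the $\eta_k$ are bounded below by a fixed positive constant (and only increase), the contraction factor is bounded away from $1$, so $O(\log(1/\eps))$ outer iterations bring $d(\hat y^k, S)$ below any target from the fixed starting distance $d(\hat y^0, S)$ — and if $\eta_k \to \infty$ the rate is in fact superlinear.

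For the inner loop I would observe that~\eqref{eq:lp-subproblem-dual} is a convex quadratic program over the nonnegative orthant whose objective is linear plus a convex quadratic of an affine image of $(x_\LP, z_\LP)$ — exactly the setting of the Luo--Tseng and Wang--Lin error bounds, which give global R-linear convergence of both the randomized coordinate descent and the projected Newton--CG routines used in Line~2. One checks that the combination $A_\LP x_\LP + z_\LP$ appearing in Line~3 is invariant on the inner optimal set and depends Lipschitz-continuously on the inner suboptimality (via quadratic growth of the inner objective), and that the relevant moduli are controlled by $\eta_k$, $\norm{A_\LP}$ and a polyhedral constant — all uniformly bounded across the $O(\log(1/\eps))$ outer iterations when the previous iterate is used to warm-start. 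Hence $O(\log(1/\eps))$ inner iterations make the computed $\hat y^{k+1}$ differ from $\op{prox}_{\eta_k g}(\hat y^k)$ by at most some $\delta_k = O(\eps)$.

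Finally I would combine the two estimates with the inexact-proximal-point inequality $d(\hat y^{k+1}, S) \le \rho\, d(\hat y^k, S) + \delta_k$, where $\rho = \sup_k \frac{\kappa}{\sqrt{\kappa^2 + \eta_k^2}} < 1$. Unrolling over $K$ outer iterations yields $d(\hat y^K, S) \le \rho^K d(\hat y^0, S) + \sum_{k < K} \rho^{K-1-k}\delta_k$; choosing $K = O(\log(1/\eps))$ so that $\rho^K d(\hat y^0, S) \le \eps/2$ and each $\delta_k \le \eps/2$ (affordable by the previous paragraph) gives $d(\hat y^K, S) \le \eps$ after rescaling $\eps$ by an absolute constant. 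I expect the inner-loop step to be the main obstacle: showing that a \emph{fixed} budget of $O(\log(1/\eps))$ inner iterations suffices at \emph{every} outer iteration requires the error-bound and condition constants of the successive subproblems, together with the quality of the warm start, to be bounded independently of $k$, and one must then carry out the bookkeeping showing the accumulated inner errors $\sum_k \rho^{K-1-k}\delta_k$ do not swamp the outer geometric decay.
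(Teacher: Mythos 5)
This theorem is not proved in the paper at all: it is imported verbatim as Theorem~3 of \citet{yen15}, and the text following it explicitly warns that the $O(\cdot)$ hides problem-dependent constants such as condition numbers. So there is no in-paper proof to compare against; the relevant comparison is with the argument in the cited reference, and your sketch reconstructs essentially that argument: the augmented Lagrangian method is the (inexact) proximal point method on the dual LP, a polyhedral error bound (Robinson/Hoffman, or equivalently the Luo--Tseng framework) upgrades the generic proximal rate to a linear one for the outer loop, the inner quadratic subproblems satisfy their own error bound giving R-linear convergence of the coordinate-descent and projected-Newton routines, and the two geometric decays are combined with the standard inexact-proximal recursion $d(\hat y^{k+1},S) \le \rho\, d(\hat y^k,S) + \delta_k$. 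Your outline is correct and is the same route as the source. The step you flag as the main obstacle --- showing that a fixed budget of $O(\log(1/\eps))$ inner iterations suffices at every outer iteration, which requires the error-bound moduli of the successive subproblems and the warm-start quality to be bounded uniformly in $k$ --- is indeed where the real work lies, and it is precisely the source of the hidden constants the paper alludes to; a complete write-up would need to make those uniform bounds explicit rather than asserting them.
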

	
	The $O$ in the above theorem hides problem-dependent constants such as condition numbers. This theoretical guarantee applies to the dual solution, and not the primal. Thus, to find a primal-dual solution pair, in theory we must run Algorithm~\ref{alg:alm} twice: on the primal (to find a dual solution) and then the dual (to find a primal solution). In practice, the primal solution from the first run already has extremely low exploitability, so the second run would be unnecessary. 
	
	The rest of the paper covers our new contributions.
	
	\section{Adapting the $O(\log^2(1/\eps))$ Sparse LP Solver}
	
	In order to make the above LP algorithm fast for game solving, we had to make a modification and also deal with the caveat of eternally infeasible solutions $x_\LP$ and $y_\LP$.
	\subsection{Limiting the Number of Inner Iterations}
	
	\citet{yen15} give an implementation of their algorithm, which they call {\it LPsparse}. In it, the inner loop runs until either (1) it converges to a sufficiently small error tolerance, or (2) some prescribed iteration limit is hit. The iteration limit is set to increase exponentially every time it is hit. In practice, we found this to be far too aggressive, leading to inner loops that take prohibitively long (an hour or more on two-player no-limit Texas hold'em endgames). Thus, we instead we only allow the number of inner iterations to grow linearly with respect to the number of outer iterations. Since both the outer and inner loop lengths are bounded by $O(\log(1/\eps))$ in  Theorem~\ref{thm:slp-main}, this does not change the theoretical performance guarantee of the algorithm, and it leads to a significant speedup in practice. 
	
	\subsection{Normalizing Infeasible Solutions}\label{s:normalization}
	Algorithm~\ref{alg:alm} will output an infeasible solution pair. To retrieve a valid behavior strategy (feasible solution), we first project into the positive orthant (i.e., zero out any negative entries), and then normalize each information set in topological order, starting with the root. This results in a strategy pair whose Nash gap we can evaluate. This normalization step roughly maintains the guarantee of Theorem~\ref{thm:slp-main}:
	\begin{theorem}\label{thm:normalization}
		Suppose $x_\LP = (x, z)$ is an infeasible solution to~\eqref{eq:lp} such that $d((x, z), S) \le \eps$, where $S$ is the set of optimal solutions to $\eqref{eq:lp}$. Then the above normalization yields a (feasible) strategy with exploitability at most $\eps n^4 \norm{A}_\infty$, where $n$ is the total number of sequences between the two players.
	\end{theorem}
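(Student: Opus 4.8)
The plan is to pick a nearest point $(x^*, z^*) \in S$ to $(x,z)$, so that in particular $\norm{x - x^*}_2 \le \eps$, and to observe that the $x$-part of any optimal solution of~\eqref{eq:lp} is a Nash (maximin) strategy for player~1. Indeed, for fixed feasible $x$ the inner maximization $\max_z\{c^Tz : C^Tz \le A^Tx\}$ equals $\min_{Cy=c,\,y\ge0} x^TAy$ by LP duality, so~\eqref{eq:lp} has optimal value $v^* = \max_x\min_y x^TAy$ and an optimal $(x^*,z^*)$ satisfies $\min_y x^{*T}Ay = c^Tz^* = v^*$. The proof then splits into two essentially independent pieces: (i)~bound the distance between the normalized feasible strategy $\bar x$ and $x^*$; and (ii)~convert a bound on $\norm{\bar x - x^*}$ into a bound on exploitability.

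Piece~(ii) is the easy direction. For any feasible opponent strategy $y$ (a sequence-form vector, hence $\norm{y}_\infty \le 1$), we have $|(\bar x - x^*)^T A y| \le \norm{\bar x - x^*}_1\,\norm{Ay}_\infty \le \norm{A}_\infty\,\norm{\bar x - x^*}_1$. Since $\bar x$ is feasible, $\exp(\bar x) = v^* - \min_y \bar x^TAy \le \bigl(v^* - \min_y x^{*T}Ay\bigr) + \norm{A}_\infty\norm{\bar x - x^*}_1 = \norm{A}_\infty\norm{\bar x - x^*}_1$. So it suffices to show $\norm{\bar x - x^*}_1 \le n^4\eps$, and since $\norm{\cdot}_1 \le n\norm{\cdot}_\infty$ on $\R^{S_1}$, it is enough to show $\norm{\bar x - x^*}_\infty \le n^3\eps$.

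For piece~(i), I would first note that projection onto the nonnegative orthant is nonexpansive and fixes $x^* \ge 0$, so the projected vector $x^+$ still satisfies $\norm{x^+ - x^*}_2 \le \eps$. The remaining step rescales each infoset top-down, and the right thing to track is $\alpha_k := \max\{\,|\bar x(s) - x^*(s)| : \op{depth}(s) = k\,\}$, argued by induction on $k$ (with $\alpha_0 = 0$ since $\bar x(\emptyset) = x^*(\emptyset) = 1$). At an infoset $I$ with parent sequence $p$ and children $(I,a)$, writing $S^+ = \sum_a x^+(I,a)$ and using $x^*(p) = \sum_a x^*(I,a)$, a short calculation (splitting $\bar x(I,a) - x^*(I,a)$ into a term proportional to $\bar x(p) - x^*(p)$ and a term $\tfrac{x^*(p)}{S^+}x^+(I,a) - x^*(I,a)$) gives, in the generic case $S^+ > 0$, $|\bar x(I,a) - x^*(I,a)| \le \alpha_{\op{depth}(p)} + 2\sum_a |x^+(I,a) - x^*(I,a)|$; the degenerate case $S^+ = 0$ forces every $x^*(I,a) \le \eps$, hence $x^*(p) \le n\eps$, and again yields an additive increment. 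Summing over the at most $n$ levels and using $\sum_a |x^+(I,a) - x^*(I,a)| \le \norm{x^+ - x^*}_1 \le \sqrt n\,\eps$ gives $\alpha_k \le O(n^{3/2})\eps$ (or $O(n^2)\eps$ once the degenerate case is folded in), so $\norm{\bar x - x^*}_\infty \le n^3\eps$ for all relevant $n$, which with~(ii) closes the argument.

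The main obstacle is piece~(i): the normalization can distort \emph{behavioral} probabilities arbitrarily badly at an infoset reached with tiny probability under $x^*$, so the estimate must not be made in behavioral terms. The point to get right is that in \emph{sequence-form} coordinates the error grows only additively in the tree depth, never multiplicatively — the ratio $\bar x(I,a)/\bar x(p) \in [0,1]$ acts as a contraction on the residual, and the rare-infoset case is harmless precisely because $\bar x(p)$ is then itself tiny. Given the generous $n^4$ slack, pinning down the exact polynomial degree is not delicate; any bookkeeping that keeps the per-level growth polynomial suffices.
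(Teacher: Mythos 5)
Your proposal follows essentially the same route as the paper's proof: an induction down the sequence-form tree showing that the normalization error grows only additively (polynomially) with depth, followed by a Lipschitz/H\"older argument converting the distance $\norm{\bar x - x^*}$ into an exploitability bound. Your piece (i) is in fact somewhat more careful than the paper's lemma --- you explicitly handle the projection onto the nonnegative orthant (nonexpansiveness) and the degenerate case $S^+=0$, both of which the paper glosses over, and your decomposition of $\bar x(I,a)-x^*(I,a)$ is a cleaner substitute for the paper's ``moves all the $x_{i_k}$s in the same direction'' step.

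One inequality in piece (ii) is wrong as written: $\norm{Ay}_\infty \le \norm{A}_\infty$ requires $\norm{y}_1 \le 1$, but a sequence-form strategy only satisfies $\norm{y}_\infty \le 1$, and $\norm{y}_1$ can be as large as the number of player-2 information sets, i.e.\ $\Theta(n)$. The correct bound is $\abs{(\bar x - x^*)^T A y} \le \norm{A}_\infty \norm{\bar x - x^*}_1 \norm{y}_1 \le n\,\norm{A}_\infty \norm{\bar x - x^*}_1$. This extra factor of $n$ means your stated reduction (``it suffices to show $\norm{\bar x - x^*}_1 \le n^4\eps$'') would only give $n^5$; however, since your piece (i) actually delivers $\norm{\bar x - x^*}_\infty \le O(n^2)\eps$ and hence $\norm{\bar x - x^*}_1 \le O(n^3)\eps$, the corrected chain still closes at $O(\eps n^4 \norm{A}_\infty)$. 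Note that this is an $O(\cdot)$ bound with a nontrivial constant (from the factor $2$ in your generic case and the degenerate-case increment), whereas the paper's accounting ($\norm{x'-x^*}_2 \le \eps n^2$ times Lipschitz constant $\norm{A}_1 \le n^2\norm{A}_\infty$) lands exactly on $\eps n^4\norm{A}_\infty$ with no slack. Since the paper itself declares the bound loose and only the polynomial dependence matters for the $O(\log^2(1/\eps))$ punchline, this constant-factor discrepancy is immaterial, but you should fix the H\"older step.
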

	A proof is in the appendix. The above bound is loose, but it is unnecessary to improve it for the theoretical punchline: combining Theorems~\ref{thm:slp-main} and~\ref{thm:normalization}, we see that the LP algorithm converges to a strategy with exploitability $\eps$ in $O(\log^2(1/\eps))$ inner iterations (where the $O$ possibly hides problem-dependent constants), assuming $A$ is normalized (i.e., $\norm{A}_\infty$ is fixed to, say, 1). 
	
	\section{Sparse Factorization} \label{s:factor}
	In many games, the payoff matrix $A$ is somewhat dense. This occurs when the number of terminal game tree nodes, $\abs{Z}$, is large compared to the total number of sequences $\abs{S_1} + \abs{S_2}$, that is, when a significant fraction of the sequence pairs represent valid terminal nodes. In most normal-form (a.k.a. matrix-form) games, $A$ is fully dense, whereas in extensive-form games of perfect information, $A$ is extremely sparse (because the number of terminal nodes equals the total number of terminal sequences between the players). In most real games, the value of each entry $A_{ij}$ can be computed in constant time from the indices $i$ and $j$ alone based on the rules of the game, with a minimal amount of auxiliary memory, so $A$ can be stored implicitly. In these cases, the sparse LP solver is at a disadvantage compared to the CFR family of algorithms. CFR can run with only implicit access to $A$. Its memory usage is thus $O(|S_1| + |S_2|)$. LP solvers, on the other hand, require a full description of $A$, which here will have size $O(|Z|)$. Our idea here is to make LP solvers practical by carefully compressing $A$ in a way that standard LP solvers can still handle. 
	
	This leads to our main idea. If we can write $A$ approximately as the product of two matrices; that is, $A = \hat A + UV^T$, such that $\norm{U}_0 + \norm{V}_0 + \norm{\hat A}_0 \ll \norm{A}_0$, then we 
	can reformulate 
	the LP~\eqref{eq:lp} as 
	\begin{align}
	\max_{x \ge 0, z, w} &\quad c^T z 
	\\ \qq{s.t.} &\quad  Bx = b,
	\quad C^T z \le V w + \hat A^T x,
	\quad U^T x = w
	\end{align}
	which, in standard form, has $O(\norm{B}_0 + \norm{C}_0 + \norm{U}_0 + \norm{V}_0 + \norm{\hat A}_0)$ nonzero constraint matrix entries. In this formulation, we demand that not only $U$ and $V$ but also the residual $\hat A$ be sparse. Depending on the density of $A$ and the quality of the factorization $\hat A + UV^T$, a good factorization could yield a quadratic improvement in both the time and space used by the LP solver.
	
	When $A$ is low-rank, SVD would provide such a factorization. However, in many cases, $A$ is not sparse and not well approximated by a low-rank factorization. Further, even when $A$ is low-rank, it is possible that, for example, $A - uv^T$ is a dense matrix (where $uv^T$ is the best rank-1 approximation to $A$), which means that the algorithm would take $\Omega(mn)$ time and memory per iteration starting from the second outer iteration. 
	We now give examples of matrices $A$ for which finding a sparse factorization in our style is superior to finding a standard low-rank factorization (SVD), both in speed and resulting sparsity. An additional example can be found in the appendix.
	\begin{exampl}\label{ex:rank1}
		Let $A_1 = uv^T$ be a rank-one matrix, and let $A$ be $A_1$, except its lower-triangular half has been zeroed out. In general, $A$ will now be full-rank, and the SVD of $A$ will not be sparse. However, we can express $A = UV^T$ with $\norm{U}_0 = \norm{V}_0 = O(n \log n)$ as follows. Set $u_0 = u$ except with its right half zeroed out, and set $v_0 = v$ except with its left half zeroed out. Then $u_0v_0^T$ matches the upper-right quadrant of the matrix $A$, as shown in Figure~\ref{fig:ex-rank1}. 
		Moreover, $A - u_0v_0^T$ is block diagonal, where the two blocks have size $(n/2) \times (n/2)$ and have the same structure as $A$ itself. Thus, we may recursively factor the two blocks. The vectors $u_0$ and $v_0$ both have $n/2$ nonzero entries, so the total number of nonzero entries in the factorization is expressed by the recurrence $S(n) = n + 2S(n/2)$, which solves to $S(n) = O(n \log n)$. The matrices $U$ and $V$ will both have $\Theta(n)$ columns. This example shows up in practice; the payoff matrix of poker endgames is block diagonal, where the blocks have essentially this form.
		\begin{figure}[!ht]
			\centering
			\includegraphics[width=0.25\columnwidth]{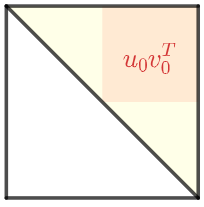}
			\caption{Illustration of factorization in Example~\ref{ex:rank1}. The box represents the matrix $A$. The upper right shaded regions represents its nonzero entries. The first iteration of the factorization zeros out the orange shaded box.}
			\label{fig:ex-rank1}
		\end{figure}
	\end{exampl}
	\begin{exampl}
		Let $A_0 = \hat A + UV^T$ be a sparsely-factorable matrix, and $A = A_0 + \hat A_0$ where the residual $\hat A_0$ may be high-rank, but is sparse. For example, perhaps $A$ is $A_0$ with some entries around its diagonal zeroed out. Then $A$ itself is also sparsely factorable as $A = (\hat A + \hat A_0) + UV^T$. This example may seem trivial, but the SVD does not share a similar property. For example, if $\hat A_0$ is the matrix from Example~\ref{ex:rank1}, and $A_0$ is a general sparsely-factorable matrix (even the zero matrix), then the SVD of $A = A_0 + \hat A_0$ will be dense, but $A$ will still have a sparse factorization.
	\end{exampl}
	
	\section{Factorization Algorithm}
	In this section, we develop a general algorithm for factoring an arbitrary sparse matrix $A$ into the product of two possibly sparser---and never denser---matrices. For this section, we let $m = |S_1|$ and $n = |S_2|$ so that $A \in \R^{m \times n}$. We follow the general strategy used by the power iteration SVD algorithm (e.g., \citealp{golub96matrix}). Algorithm~\ref{alg:factor} 
	reduces the factorization problem to solving, for a given matrix $A$, the subproblem 
	\begin{align}
	\argmin_{u,v} \norm{A - uv^T}. \label{eq:factor-sub}
	\end{align}
	\begin{algorithm}[!htb]
		\caption{Matrix factorization}\label{alg:factor}
		\textbf{Input:} matrix $A \in \R^{m \times n}$, norm $\norm{\cdot}$ on matrices
		
		\textbf{Output:} matrices $U \in \R^{m \times r}$ and $V \in \R^{n \times r}$
		
		\begin{algorithmic}[1]
			\STATE set $U$ and $V$ to be empty matrices
			\LOOP
			\STATE $u, v \gets \argmin_{u, v} \norm{A- uv^T}$
			\IF{$\norm{u}_0 > 1$ and $\norm{v}_0 > 1$} 
			\STATE $U \gets [U, u]$
			\STATE $V \gets [V, v]$
			\STATE $A \gets A - uv^T$
			\ENDIF
			\ENDLOOP
		\end{algorithmic}
	\end{algorithm}
	
	When $\norm{\cdot}$ is the 2-norm, this problem can be solved using the standard power iteration algorithm. However, when $\norm{\cdot}$ is the 0-norm, the problem is not so easy, and even using the 1-norm as a convex substitute for the 0-norm does not help:
	\begin{theorem}[\citealp{gillis2018complexity}]\label{thm:0-norm-hard}
		When $\norm{\cdot}$ is the 1-norm or 0-norm, the optimization problem~\eqref{eq:factor-sub} is NP-hard.
	\end{theorem}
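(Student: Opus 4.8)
This statement is quoted from prior work, so the plan is to recall the reduction establishing it. The cleanest route is a polynomial-time reduction from the \emph{maximum edge biclique} problem: given a bipartite graph $G = (U \cup W, E)$, find $R \subseteq U$ and $C \subseteq W$ with $R \times C \subseteq E$ maximizing $|R|\cdot|C|$; this is NP-hard. I would take $A \in \R^{|U| \times |W|}$ to be the $\{0,1\}$ biadjacency matrix of $G$ (so $A_{ij} = 1$ iff $(i,j) \in E$), possibly after appending a small gadget described below, and argue that a (near-)optimal solution to~\eqref{eq:factor-sub} on this matrix encodes a maximum biclique.

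The easy, ``combinatorial'' direction is a direct count. If $u = \mathbf 1_R$ and $v = \mathbf 1_C$ are the indicator vectors of a biclique $R \times C \subseteq E$, then $uv^T$ equals $1$ exactly on $R \times C$ and $0$ elsewhere, and every entry of the residual $A - uv^T$ lies in $\{0,1\}$; counting the surviving edges gives $\norm{A - uv^T}_0 = |E| - |R|\,|C|$, and the $1$-norm of the residual equals the same quantity. Hence, \emph{restricted to $0/1$ ``rectangular'' matrices}, minimizing~\eqref{eq:factor-sub} is exactly maximum edge biclique, and a rank-one approximant achieving a small objective certifies a large biclique.

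The crux --- and the step I expect to be the main obstacle --- is the converse: showing that an \emph{arbitrary} real rank-one matrix $uv^T$ cannot beat the best $0/1$ rectangle, i.e., that~\eqref{eq:factor-sub} on the (padded) matrix admits a $0/1$ rectangular optimizer. For fixed $v$ the objective separates over rows, and in row $i$ one minimizes over the scalar $u_i$ either $\sum_j |A_{ij} - u_i v_j|$ (for the $1$-norm) or $\sum_j \mathbf 1\{A_{ij} \neq u_i v_j\}$ (for the $0$-norm). The former is a convex piecewise-linear weighted-median problem whose optimum can be taken at a breakpoint $u_i = A_{ij}/v_j$; for the latter the best $u_i$ is either $0$ or a value of the form $1/v_j$. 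After a global rescaling that normalizes $\max_j |v_j| = 1$ (and symmetrically for $u$), an alternation/exchange argument collapses the optimum toward $u, v \in \{0,1\}$. The delicate point is ruling out fractional solutions that exploit many distinct ratios $A_{ij}/v_j$; I would handle this with a gadget --- appending to $A$ a block of suitably scaled near-identity rows and columns (or weighted duplicates of the vertices) that pins the scale of the factors and forces their supports to behave like vertex subsets, so that every real rank-one approximant is dominated by the rectangle supported on its own support.

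Finally I would assemble the pieces: choosing the gadget parameters so that its contribution to the objective is smaller than $1$ (the granularity separating distinct possible values of $|R|\,|C|$), the minimum of $\norm{A' - uv^T}$ on the padded matrix $A'$ determines $\max_{R\times C \subseteq E} |R|\,|C|$ exactly. This yields a polynomial-time reduction from maximum edge biclique to both the $1$-norm and $0$-norm versions of~\eqref{eq:factor-sub}, establishing NP-hardness.
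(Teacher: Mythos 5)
First, note that the paper does not prove Theorem~\ref{thm:0-norm-hard} at all: it is imported verbatim from the cited work of Gillis and co-authors, so there is no in-paper argument to compare yours against. Judged on its own terms, your sketch has a genuine flaw, and it appears already in the direction you label ``easy.'' The identity $\norm{A - uv^T}_0 = |E| - |R|\,|C|$ holds only when $R \times C$ is a biclique; for general indicator vectors the residual is $|E| + |R|\,|C| - 2\,|E \cap (R\times C)|$, and a rectangle that covers many edges and few non-edges can beat every biclique. Concretely, let $A$ be the biadjacency matrix of the bipartite complement of a perfect matching on $n+n$ vertices: the all-ones rank-one matrix $\mathbf{1}\mathbf{1}^T$ leaves residual exactly $n$ (the number of non-edges), in both the $0$-norm and the $1$-norm, while $|E| - \max_{R\times C\subseteq E}|R|\,|C| = (n^2-n) - \lfloor n^2/4\rfloor = \Theta(n^2)$. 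So even the $0/1$-restricted problem is not maximum edge biclique, and the claim that ``a rank-one approximant achieving a small objective certifies a large biclique'' is false. This is precisely why the classical biclique reduction is used for \emph{weighted} (or missing-data) low-rank approximation, where zero-weight entries let one mask the non-edges; for the unweighted $\ell_1$ rank-one problem the known hardness proof of Gillis--Vavasis instead reduces from MAX-CUT, building a $\{-1,0,1\}$ edge-by-vertex matrix and showing the optimal factor may be taken in $\{\pm 1\}^{|V|}$, so that it encodes a cut.

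Second, the step you yourself flag as the crux --- forcing an arbitrary real rank-one approximant down to a combinatorial one via an unspecified ``gadget'' of scaled near-identity rows and columns --- is the entire content of the theorem and is not carried out. Without a concrete gadget, together with a proof that it both pins the scale of $(u,v)$ and repairs the counting problem above, the reduction does not go through for either norm. As it stands, the proposal establishes only the trivial direction (a biclique yields a feasible point of known objective value) and cannot be accepted as a proof of NP-hardness.
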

	
	We thus resort to an algorithm that may not yield the optimal solution but works extremely well in practice. Algorithm~\ref{alg:factor-sub} reduces~\eqref{eq:factor-sub} to solving the subproblem 
	\begin{align}
	\argmin_v \norm{A - uv^T} \label{eq:factor-sub2}
	\end{align}
	for a given matrix $A$ and now a {\it fixed} vector $u$ (the other subproblem is analogous by transposing $A$ and flipping the roles of $u$ and $v$). Again, when $\norm{\cdot}$ is the 2-norm, and the optimizer of~\eqref{eq:factor-sub2} is just $v^* = Au$, so that the full algorithm is just standard power iteration algorithm for SVD. 
	\begin{algorithm}[!htb]
		\caption{Approximating $\argmin_{u, v} \norm{A- uv^T}$}\label{alg:factor-sub}
		\textbf{Input:} matrix $A \in \R^{m \times n}$
		
		\textbf{Output:} vectors $u, v$.
		
		\begin{algorithmic}[1]
			\STATE make an initial guess for $u$
			\LOOP
			\STATE $v \gets \argmin_v \norm{A - uv^T}$
			\STATE $u \gets \argmin_u \norm{A - uv^T}$
			\ENDLOOP
		\end{algorithmic}
	\end{algorithm}
	
	When $\norm{\cdot}$ is instead the 0-norm, as seen in Algorithm~\ref{alg:factor-sub2}, the optimizer of~\eqref{eq:factor-sub2} is the vector $v$ whose $j$th element is the mode of $A_{ij}/u_i$ over all $i$ for which $u_i \ne 0$. Since the objective function~\eqref{eq:factor-sub} cannot increase during the alternating minimization, and the objective values are integral, Algorithm~\ref{alg:factor-sub} terminates in finitely many iterations at a local optimum. Algorithm~\ref{alg:factor} is an anytime algorithm. In practice, we terminate it when the number of unsuccessful iterations (the number of iterations in which the condition on line 4 is false) exceeds the number of successful iterations. 
	\begin{algorithm}[!htb]
		\caption{Sparse matrix factorization subproblem}\label{alg:factor-sub2}
		\textbf{Input:} matrix $A \in \R^{m \times n}$, vector $u \in \R^m$
		
		\textbf{Output:} vector $v$ minimizing $\norm{A - uv^T}_0$
		
		\begin{algorithmic}[1]
			\STATE $q \gets$ map from indices to lists of real numbers
			\FOR{each $i$ for which $u_i \ne 0$}
			\FOR{each nonzero entry $A_{ij}$ in row $i$ of $A$}
			\STATE append $A_{ij} / u_i$ to $q[j]$
			\ENDFOR
			\ENDFOR
			\STATE $v \gets 0$
			\FOR{each $j$ for which $q[j]$ is nonempty}
			\STATE $M\gets$ mode($q[j]$)
			\STATE count $\gets$ number of times $M$ appears in $q[j]$
			\STATE {\bf if} count $>$ $\norm{u}_0 - \op{len}(q[j])$ {\bf then} $v_j \gets M$
			\ENDFOR
			\label{alg3}
		\end{algorithmic}
	\end{algorithm}
	
	Algorithms~\ref{alg:factor}-\ref{alg:factor-sub2} constitute an approximate algorithm for sparse matrix factorization. It is not exact for two reasons. First, Algorithm~\ref{alg:factor} is greedy: at each step of the loop, it chooses the immediate best rank-1 matrix and greedily appends it to $U$ and $V$. This is not always optimal, and in fact in the worst case can already doom the algorithm to have a trivial approximation factor $\Theta(n)$. Second, Algorithm~\ref{alg:factor-sub} is not exact when $\norm{\cdot}$ is the 0-norm or 1-norm, as expected from our Theorem~\ref{thm:0-norm-hard}. Nevertheless, the  method works remarkably well as we will show experimentally.\footnote{We also experimented with using the 1-norm as a convex relaxation of the 0-norm. Here, the exact solution to~\eqref{eq:factor-sub2} is given by~\citet{mx12}. This seemed to make no difference in practice, so in the experiments we use the 0-norm.} 
	
	\subsection{Initialization}
	For the SVD, the initial guess for $u$ in Algorithm~\ref{alg:factor-sub} is usually chosen to point in a random direction (i.e., $u_i \sim N(0, 1)$ are drawn i.i.d). In our case, this does not work: if we draw $u$ that way, then as long as each column of $A$ has at least two nonzero entries, the mode computation in Algorithm~\ref{alg:factor-sub2} will return $v = 0$ with probability $1$, since $A_{ij}/ u_i$ will be different for each $i$ with probability $1$. This causes the subroutine of Algorithm~\ref{alg:factor} to degenerate, leading to a trivial output. This is troubling for us, since in basically all extensive-form games, $A$ is much sparser than this. Fortunately, one small change yields an initialization that works well. Instead of initializing $u$ to a random unit vector, we initialize $u$ to a random {\it basis} vector $e_i$. This circumvents the above problem, and leads to remarkably strong performance in practice.
	
	\subsection{Implementation with Implicit Matrices}\label{s:implicit}
	A major problem with the above algorithm is that a straightforward implementation of it would store and modify the matrix $A$ in order to factor it. In the setting we are considering, $A$ is often too big to store in memory: the number of nonzero entries in $A$ may be several orders of magnitude greater than the number of rows or columns. In these settings, we would like to be able to implement the algorithm with only {\it implicit} access to $A$. Formally, we  assume access to $A$ via only an immutable oracle that, given an index $i$, retrieves the list of nonzero entries, and their indices, in the $i$th row or $i$th column of $A$.
	
	The immutability of $A$ is the biggest roadblock here. Several changes need to be made to Algorithms~\ref{alg:factor}-\ref{alg:factor-sub2} to accomodate this. First, Line~7 of Algorithm~\ref{alg:factor} is no longer possible.. Thus, Line~3 of Algorithm~\ref{alg:factor}, must be revised to read $\argmin_{u, v} \norm{A - UV^T - uv^T}$, and the matrices $U$ and $V$ must be passed through to Algorithms~\ref{alg:factor-sub} and~\ref{alg:factor-sub2}. On Line~3 of Algorithm~\ref{alg:factor-sub2}, querying the $i$th row of $A - UV^T$ requires a matrix multiplication $U_iV^T$, where $U_i$ is the $i$th row of $U$. 
	
	\subsection{Run-time Analysis}
	The run time of the algorithm depends heavily on the structure of $A$. 
	The worst case run time is $O(\norm{A}_0 n^2)$, since every inner iteration runs in at most quadratic time and removes at least one nonzero entry from $A$.
	In practice it runs dramatically faster than that, and we will now present a rough analysis, valid in most typical cases. For simplicity, assume $A \in \R^{n \times n}$ is square. This doesn't change the analysis in any meaningful way, and makes for easier exposition since we do not need to distinguish when $A$ has been transposed in Algorithm~\ref{alg:factor-sub2}.
	
	As stated above, the run time of the algorithm is dominated by the matrix multiplication $U_i V^T$, which must be performed for every $i$ where $u_i \ne 0$ on the current iteration. On the $r$th outer iteration of the algorithm, $U$ and $V$ will have $r$ columns each; therefore, $V^T \in \R^{r \times n}$, so the matrix multiplication takes time $O(rn)$. We need to perform $\norm{u}_0$ of these per inner iteration. Thus, if the algorithm runs for a total of $R$ outer loop iterations each of which takes $t$ inner-loop iterations, it will take time 
	$$O\qty(t \sum_{r = 1}^R \qty(\norm{u_r}_0 + \norm{v_r}_0) rn).$$
	In practice, the number of inner iterations $t$ per outer iteration is usually very small, say, 3. As an example, if the algorithm correctly factors a matrix of the structure in Example~\ref{ex:rank1}, the $r$th outer loop iteration will find a block of size $O(1/r) \times O(1/r)$. Thus each inner loop iteration just takes time $O(n)$, and there will be $O(n)$ iterations, so that the whole algorithm runs in time $O(n^2) = O(\norm{A}_0)$. 
	
	In most extensive-form games, the payoff matrix $A$ is block diagonal. In this case, running the factorization algorithm on $A$ is equivalent to running it on each of the blocks individually, and has the same run time as running the algorithm on each block separately. Indeed, if $u$ is initialized to a random basis vector $e_i$, the algorithm's entire run, and all its operations---including the critical matrix multiplication $U_iV^T$---will not escape the block to which row $i$ of matrix $A$ belongs. Thus, for example, running the algorithm on a matrix with blocks of size $k \times k$, each of which has the structure of Example~\ref{ex:rank1}, will still take time $O(\norm{A}_0)$. 
	
	\section{Experiments}
	
	\begin{table*}[tb]
		\newcommand{\ra}{\phantom{.00}}
		\caption{Experiments on explicitly specified games. {\it Gap} is the target Nash gap to which {\em LPsparse'} and CFR were run. {\it fnnz} is the total number of nonzero elements that resulted from running our matrix factorization algorithm, reported only when the factorization algorithm had nontrivial effect. {\it Simplex, Barrier, LPsparse'}, and {\it CFR} are the wall-clock times, in seconds, that those four algorithms took to achieve the desired Nash gap. All times greater than 2 hours (7200 seconds) are estimated via linear regression on the log-log convergence plot. Gurobi was time-limited to half an hour (1800 seconds) because each game had at least one method that solved the game well within this limit. Since it is difficult to estimate the convergence rate of Gurobi's solver, Gurobi timeouts are simply indicated with a (T).}
		\vskip 0.15in
		\scriptsize\centering
		\begin{tabular}{lrrrrrrrrrrrr}  
			\toprule
			Game & Gap & $\abs{S_1}+\abs{S_2}$ & $\norm{A}_0$ & fnnz & Simplex & Barrier & LPsparse' & CFR \\
			\midrule
			9-card Leduc poker & .0001\phantom{0} & 5,798 & 30,924 & 13,712 & .5 & {\bf .08} & 7 & 901 \\
			13-card Leduc poker& .0001\phantom{0}  & 12,014 & 95,056 & 31,522 & 2.4 & {\bf .24} & 14 & 1,823 \\
			5x2 battleship m=2 n=1 & .0001\phantom{0} & 230,778 & 33,124 & --- & 8.7 & {\bf .44} & 5 & 2,451 \\
			4x3 battleship m=2 n=1 & .0001\phantom{0} & 639,984 & 82,076 & ---  & 81.0 & {\bf 1.47} & 14 & 4,059  \\
			3x2 battleship m=4 n=1 & .0001\phantom{0} & 3,236,158 & 1,201,284 & --- & (T) & {\bf 16.90} & 659 & 86,284 \\
			3x2 battleship m=3 n=2 & .0001\phantom{0} & 1,658,904 & 3,345,408 & --- & (T) & {\bf 20.22} & 202 & 55,040 \\
			sheriff N=10000 B=100 & .0001\phantom{0} & 1,020,306 & 2,020,101 & --- & {\bf 3.0} & 52.56 & 12 & 7,912  \\
			sheriff N=1000 B=1000 & .0001\phantom{0} & 1,005,006 & 2,003,501 & --- & {\bf 2.8} & 208.35 & 9 & 1,728\\
			sheriff N=100 B=10000 & .0001\phantom{0} & 1,030,206 & 2,020,151 & --- & {\bf 5.2} & 66.71 & 19 & 287  \\
			4-rank goofspiel & .0001\phantom{0} & 42,478 & 11,136 & --- & .7 & {\bf .39} & 42 & 51,857  \\
			5-rank goofspiel & 1.74\phantom{000} & 5,332,052 & 1,574,400 & --- & (T) & {\bf 267.46} & 7,200 & 1,081 \\
			NLH river endgame A & .00684 & 129,222 & 53,585,621 & 481,967 & {\bf 294.9} & (T) & 7,200 & 11,893 \\
			NLH river endgame B & .00178 & 61,062 & 25,240,149 & 229,454 & {\bf 54.4} & (T) & 7,200 & 3,350 \\
			\bottomrule
		\end{tabular}
		\label{tab:experiments}
	\end{table*}
	
	\begin{table*}[tb]
		\caption{Experiments on poker endgames. {\it pot} is the current pot size in big blinds. $\abs{S_1} + \abs{S_2}$ is the total number of sequences across both players. nnz is the number of nonzero entries of the payoff matrix before (first row) and after (second row) the our factorization algorithm is run. The timeout was set to 3600 seconds (1 hour).}
		\vskip 0.15in
		\scriptsize\centering
		\begin{tabular}{lrr|r|rrrrr}  
			\toprule
			&&&&\multicolumn{2}{c}{\quad \quad Factored}& Poker-Specific& Unfactored
			\\
			Endgame & Starting pot (bb) & $\abs{S_1} + \abs{S_2}$  & & Simplex & Barrier & DCFR & Simplex \\
			\midrule
			River 1 & $5.0$ & $95{,}220$
			& time (s) & ${\bf 364}$ & $2{,}116$  & $509$ & $2904$
			\\
			\multicolumn{3}{l|}{factor nnz: $58{,}707{,}847 \to 740{,}218$}
			& memory (MB) & ${\bf 259}$ & $1{,}645$ & $572$ & $5569$
			\\
			\multicolumn{3}{l|}{factor time $145$s, memory $52$MB} & Nash gap (bb)& ${\bf 6.8\times 10^{-8}}$ & $2.8\times 10^{-5}$ & $2.1\times 10^{-4}$ & $6.9\times 10^{-8}$ \\
			
			\midrule
			River 2 & $21.0$ & $68{,}102$
			& time (s) & ${\bf 113}$ & $951$  & $238$ & $830$
			\\
			\multicolumn{3}{l|}{factor nnz: $40{,}817{,}801 \to 662{,}219$}
			& memory (MB) & ${\bf 208}$ & $1{,}126$ & $450$ & $3700$
			\\
			\multicolumn{3}{l|}{factor time $125$s, memory $43$MB} & Nash gap (bb)& ${\bf 8.1\times 10^{-8}}$ & $8.5\times 10^{-7}$ & $2.4\times 10^{-4}$ & $1.0\times 10^{-7}$ \\
			
			\midrule
			River 3 & $5.0$ & $96{,}232$
			& time (s) & ${\bf 410}$ & $1{,}584$  & $591$ & timeout
			\\
			\multicolumn{3}{l|}{factor nnz: $60{,}831{,}748 \to 888{,}608$}
			& memory (MB) & ${\bf 272}$ & $1{,}730$ & $572$ & na
			\\
			\multicolumn{3}{l|}{factor time $181$s, memory $58$MB} & Nash gap (bb)& ${\bf 5.8\times 10^{-8}}$ & $6.3\times 10^{-7}$ & $2.6\times 10^{-4}$ & na\\
			
			\midrule
			River 4 & $10.0$ & $82{,}440$
			& time (s) & ${\bf 231}$ & $1{,}242$  & $389$ & $1936$
			\\
			\multicolumn{3}{l|}{factor nnz: $51{,}332{,}645 \to 781{,}400$}
			& memory (MB) & ${\bf 249}$ & $1{,}433$ & $511$ & $4740$
			\\
			\multicolumn{3}{l|}{factor time $158$s, memory $52$MB} & Nash gap (bb)& ${\bf 1.0\times 10^{-7}}$ & $1.7\times 10^{-6}$ & $2.7\times 10^{-4}$ & $1.2\times 10^{-7}$ \\
			
			\midrule
			River 5 & $5.0$ & $96{,}922$
			& time (s) & ${\bf 210}$ & $1{,}631$  & $366$ & $2120$
			\\
			\multicolumn{3}{l|}{factor nnz: $61{,}078{,}916 \to 816{,}401$}
			& memory (MB) & ${\bf 269}$ & $1{,}735$ & $572$ & $5748$
			\\
			\multicolumn{3}{l|}{factor time $156$s, memory $55$MB} & Nash gap (bb)& ${6.6\times 10^{-8}}$ & $1.7\times 10^{-5}$ & $3.3\times 10^{-4}$ & $\bf 5.9\times 10^{-8}$ \\
			
			\midrule
			River 6 & $36.0$ & $51{,}632$
			& time (s) & ${\bf 38}$ & $516$  & $109$ & $142$
			\\
			\multicolumn{3}{l|}{factor nnz: $27{,}859{,}761 \to 454{,}203$}
			& memory (MB) & ${\bf 164}$ & $848$ & $390$ & $2292$
			\\
			\multicolumn{3}{l|}{factor time $71$s, memory $32$MB} & Nash gap (bb)& ${1.1\times 10^{-7}}$ & $1.4\times 10^{-5}$ & $7.9\times 10^{-4}$ & $\bf 4.8\times 10^{-8}$ \\
			
			\midrule
			River 7 & $37.5$ & $47{,}152$
			& time (s) & ${\bf 21}$ & $708$  & $89$ & $81$
			\\
			\multicolumn{3}{l|}{factor nnz: $23{,}087{,}696 \to 445{,}810$}
			& memory (MB) & ${\bf 159}$ & $770$ & $389$ & $2086$
			\\
			\multicolumn{3}{l|}{factor time $68$s, memory $31$MB} & Nash gap (bb)& ${1.8\times 10^{-7}}$ & $8.0\times 10^{-6}$ & $4.9\times 10^{-4}$ & $\bf 1.7\times 10^{-7}$ \\
			
			\midrule
			River 8 & $25.0$ & $53{,}536$
			& time (s) & ${\bf 51}$ & $644$  & $135$ & $167$
			\\
			\multicolumn{3}{l|}{factor nnz: $30{,}197{,}553 \to 488{,}937$}
			& memory (MB) & ${\bf 167}$ & $792$ & $389$ & $2702$
			\\
			\multicolumn{3}{l|}{factor time $84$s, memory $34$MB} & Nash gap (bb)& ${1.0\times 10^{-7}}$ & $\bf 9.3\times 10^{-9}$ & $2.6\times 10^{-4}$ & $6.8\times 10^{-8}$ \\
			
			\midrule
			Small Turn & $200.0$ & $352{,}800$
			& time (s)  & $3{,}241$ & ${\bf 482}$  & $726$ & timeout
			\\
			\multicolumn{3}{l|}{factor nnz: $96{,}450{,}855 \to 2{,}680{,}527$}
			& memory (MB) & $887$ & $1{,}545$ & ${\bf 133}$ & na
			\\
			\multicolumn{3}{l|}{factor time $244$s, memory $193$MB} & Nash gap (bb)& ${\bf 2.3\times 10^{-8}}$ & $3.3\times 10^{-6}$ & $1.0\times 10^{-6}$ & na \\
			
		\end{tabular}
		\label{tab:poker-experiments}
	\end{table*}
	
	We compared state-of-the-art commercial implementations~\cite{gurobi} of the common LP solving algorithms (simplex, dual simplex, and barrier) and our modified version of the $O(\log^2(1/\eps))$ {\it LPsparse} algorithm~\citep{yen15} (which we call {\it LPsparse'}), combined with our factorization algorithm, to the newest, fastest variants of CFR~\cite{bs19}. 
	
	\subsection{Experiments with All Solvers}
	
	In the first set of experiments, we studied the setting where the payoff matrix $A$ is given explicitly. In this setting, the factorization algorithm can be allowed to modify $A$, and CFR variants must load the whole matrix $A$ into memory. In this experiment, we use the   game-independent CFR implementation 
	built in the Rust programming language for speed.  
	In each game, the largest entry of the payoff matrix in absolute value, that is, $\norm{A}_\infty$, was normalized to be $1$. 
	We ran {\it LPsparse'} four times on each game; in particular, for each combination of (i) which player is chosen to be player $x$ in~\eqref{eq:lp}, in other words, whether~\eqref{eq:lp} is solved via the primal or dual;  and (ii) choice of inner iteration algorithm (RC or PG). 
	We tested four different variants of CFR: DCFR[$\infty, -\infty, 1$] (``CFR+''), DCFR[$\infty, -\infty, 2$] (``CFR+ with quadratic averaging''), DCFR[$1.5, 0, 2$] (``DCFR''),  DCFR[$1, 1, 1$] (``LCFR''). These variants are introduced and analyzed in depth by~\citet{bs19} and represent the current state of the art in large-scale game solving. The best of those variants for each game is shown in Table~\ref{tab:experiments}.
	We ran {\it LPsparse'} and CFR to target precision (Nash gap) $10^{-4}$, or for $2$ hours, whichever threshold was hit first. We ran primal and dual simplex to optimality (machine precision), and barrier with default settings except crossover off. We ran all solvers on a single core. The games that we tested on are standard benchmarks; they are described in the appendix.
	
	On most games, all LP solvers outperformed CFR. This marks, to our knowledge, the first time that LP (or, indeed, any fundamentally different algorithm) has been shown to be competitive against leading CFR variants on large games. 
	
	The matrix factorization algorithm performs remarkably well in practice when it needed to. On 9-card and 13-card Leduc poker, it led to a compression ratio of 2-3. More impressively, the algorithm compresses both no-limit endgames by a factor of more than 100. This brings savings of nearly the same factor in convergence rate in both games, and enables the LP algorithms to be competitive against the CFR variants in these large games. On payoff matrices that are already sparse, the factorization algorithm fails to find a sparse factorization, and terminates immediately.
	
	On a few games, the choice of which player to make the $x$ player in LP~\eqref{eq:lp};\, that is, the choice between primal and dual solves, made a significant difference. For example, in the {\it sheriff} family of games, setting $x$ to be the smuggler yields much better results. This is because the optimal strategy in the sheriff games is very sparse for the smuggler. Indeed,~\citet{yen15} make note of the fact that their algorithm performs significantly better when the optimal primal solution is sparse, since in this case the inner loop does not need to loop over the entire constraint matrix $A$. 
	
	\subsection{Experiments on No-limit Texas Hold'em Endgames}
	
	In the experiment described above, Gurobi's LP solvers consistently outperformed {\it LPsparse'} despite the theoretical guarantees of the latter. Thus, in the second set of experiments, we focus on Gurobi and DCFR.
	
	The implicit implementation of our factorization algorithm (Section~\ref{s:implicit}) allows us to scale our method to larger games than previously possible. We hence ran experiments testing this implementation on heads-up no-limit Texas Hold'em poker endgames encountered by the superhuman agent {\it Libratus}~\cite{bs18}. To align with~\citet{bs19}, we used a simple action abstraction: the bets are half-pot, full-pot, and all-in, and the raises are full-pot and all-in. All results are expressed in terms of the standard metric, namely big blinds (bb). The starting stacks are 200 big blinds per player as in the Libratus match against humans. We tested on eight real river endgames (i.e., endgames that start on the fourth betting round) and a single manually-generated small turn endgame (i.e., endgames that start on the third betting round) where the pot already has half of the players' wealth, so only a single additional bet or raise is possible.
	
	In this experiment we used an optimized poker-specific C++ implementation of DCFR.
	This implementation includes optimizations such as those of~\citet{jwbz11}, which shave an $O(k)$ factor off the runtime of CFR, where in the case of Texas hold'em poker, $k = 1326$ is the number of possible hands a player may have, and~\citet{bs15}, which prune game lines that are dynamically determined not to be part of the optimal solution. For the LP solver, we use Gurobi's simplex and barrier methods. Both primal and dual simplex were run, and only the better of the two results is shown in Table~\ref{tab:poker-experiments}. We also tested Gurobi without the factorization algorithm. In this case, we do not include results for the barrier method, because it timed out or ran out of memory in all the cases. All algorithms were again restricted to a single core.
	DCFR was run for the amount of time taken by the fastest LP variant that used factoring. For example, if Gurobi took $200$ seconds to solve a game, and the factorization algorithm took $100$ seconds, CFR was run for $300$ seconds. The results are in Table~\ref{tab:poker-experiments} and representative convergence plots showing anytime performance are in the appendix. 
	
	The factorization algorithm reduced the size of the game by a factor of 52--80 and the resulting payoff matrix had density (i.e., nonzeros divided by rows plus columns) 7.8--9.5. This is expected: poker payoff matrices are block diagonal, where the blocks are $k \times k$ and rank one with the lower-triangular half negated. Thus, they basically have the structure of Example~\ref{ex:rank1}, in which we saw a compression from density $k \approx 2^{10}$ to density $\log k \approx 10$, which is exactly the compression we are seeing here. 
	
	\begin{figure*}[!htb]
		\centering
		\includegraphics[width=0.9\columnwidth]{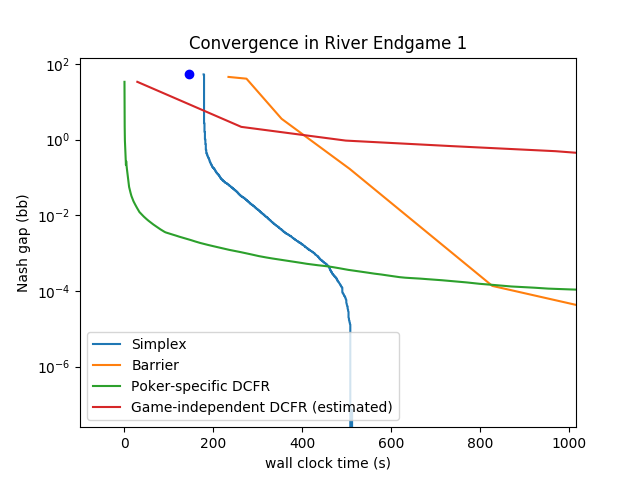}
		\includegraphics[width=0.9\columnwidth]{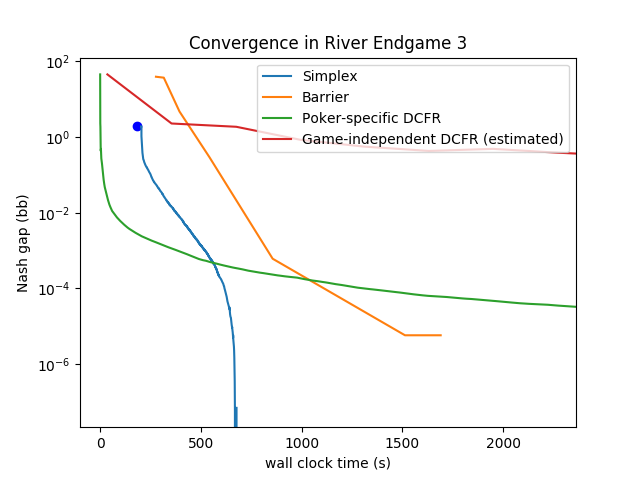}
		\includegraphics[width=0.9\columnwidth]{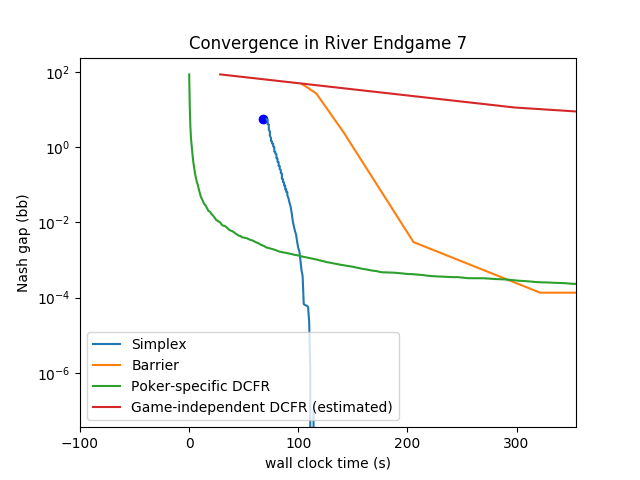}
		\includegraphics[width=0.9\columnwidth]{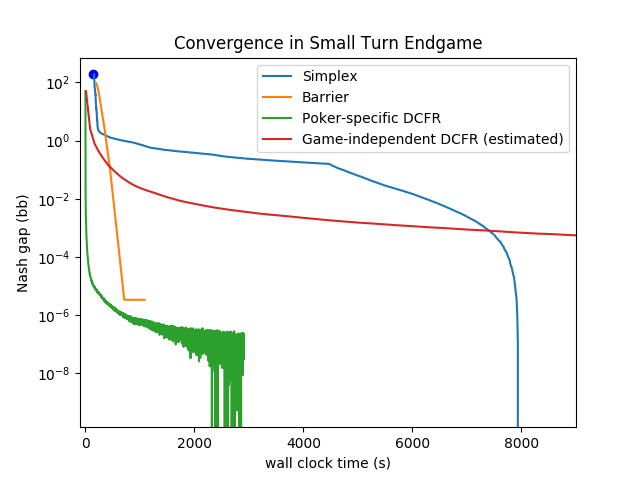}
		\vspace{-0.9\baselineskip}
		\caption{Convergence plots on representative endgames. DCFR is plotted against the best-performing LP algorithm. The blue dot represents the time taken by the factorization algorithm, and the space between the blue dot and the start of the blue line is the time taken for the algorithm to initialize the algorithm, and then find a feasible solution (simplex) or run one iteration (barrier). The drop below zero of the simplex plot is due to a quirk of Gurobi's objective value reporting, and can most likely be safely ignored. The drop below zero of the poker-specific DCFR in the small turn endgame is due to machine precision issues, and once again can be ignored.}
		\label{fig:plots}
	\end{figure*}
	
	The DCFR implementation we tested against is especially optimized to solve no-limit turn endgames. It thus may have some inefficiencies when handling river endgames. We estimate that these inefficiencies lose a factor of approximately 20 in time and space on river endgames relative to a river-optimized implementation. However, importantly, these inefficiencies pale in comparison to the speedups gained by game-specific poker speedups (e.g.,~\citealp{jwbz11}), which save a factor of approximately $k = 1326$ in time (but not space). This strongly suggests that our method would be significantly faster than any non-game-specific implementation of CFR or any modern variant. Furthermore, the memory usage of simplex, after factorization, is only a factor of $\log k \approx 10$ worse than the game-specific CFR (which stores the constraint matrix implicitly) in the case of all these endgames, which means it is often practical to use LP solvers even on extremely large games with dense payoff matrices, as long as the constraint matrix is factorable.
	
	Since primal simplex and dual simplex give respectively only primal-feasible and dual-feasible solutions, anytime performance of simplex is measured by running both simultaneously, and measuring the Nash gap between the current primal and dual solutions at each time checkpoint, using Gurobi's reported objective values. While Gurobi does not allow retrieval of these anytime solutions when its presolver is turned on, in principle they can be retrieved easily using the presolver's mapping, which unfortunately Gurobi does not expose to the end user. The convergence plots in Figure~\ref{fig:plots} show roughly what we would expect. CFR has a very stable convergence curve (until it hits too high precision, at which point numerical stability issues start kicking in, and the convergence plot looks weird). The LP solvers start out slow (especially due to the sometimes nontrivial time requirements of the factorization algorithm) but catch up with and often eventually exceed the performance of DCFR, before again very often stopping due to numerical issues. Even on turn endgames, LP algorithms consistently outperform a hypothetical non-game-specific implementation of DCFR---which we define to be 500 times slower than the poker-specific DCFR---due to the additional factor of $k \approx 1326$ in the density of the payoff matrix, and hence the additional cost of the gradient computation in DCFR. 
	\section{Conclusion and Future Research}
	
	We presented a matrix factorization algorithm that yields significant reduction in sparsity. We showed how the factored matrix can be used in an LP to solve zero-sum games. This reduces both the time and space needed by LP solvers. On explicitly represented games, this significantly outperforms the prior state-of-the-art algorithm, DCFR. It also made LP solvers competitive on large games that are implicitly defined by a compact set of rules---even against an optimized game-specific DCFR implementation. 
	There are many interesting directions for future research, such as (1) further improving the factorization algorithm, (2) investigating the explicit form of an optimal factorization in special cases, and (3) parallelizing the factorization algorithm.
	\section*{Acknowledgements}
	This material is based on work supported by the National Science Foundation under grants IIS-1718457, IIS-1617590, IIS-1901403, and CCF-1733556, and the ARO under awards W911NF1710082 and W911NF2010081.
	\bibliographystyle{icml2020}
	\bibliography{main}
	
	\clearpage
	\onecolumn
	\appendix
	\section{Proof of Theorem \ref{thm:normalization}}
	The key to the proof is to bound how much this naive normalization changes the point $x$. Let $(x^*, z^*)$ be the result of projecting $(x, z)$ into the optimal set $S$.
	\begin{lemma*}
		Let $x'$ be the result of normalizing $x$ according to the given scheme, and $i$ be an information set at depth $d$ (with the root defined to be at depth $0$. Then we have $\abs{x'_i - x^*_i} \le \eps d\sqrt{n}.$
	\end{lemma*}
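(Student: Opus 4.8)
The plan is to prove this by induction on the depth $d$, working (without loss of generality) with player~$1$'s sequence-form strategy. Recall that the constraints $Bx = b$ are the flow constraints of the sequence form: $x(\emptyset) = 1$ for the empty sequence, and $\sum_{a} x(I,a) = x(\sigma_I)$ for every information set $I$, where $\sigma_I$ denotes the parent sequence of $I$ and the sum ranges over the actions available at $I$. The optimum $x^*$ satisfies all of these and is nonnegative, while $\norm{x - x^*}_2 \le \eps$. Write $\bar x$ for the projection of $x$ onto the positive orthant (i.e.\ $x$ with its negative entries zeroed out --- the intermediate object in the normalization scheme); since projection onto a convex set is nonexpansive and $x^*\ge 0$, we get $\norm{\bar x - x^*}_2 \le \norm{x - x^*}_2 \le \eps$, so in particular $\abs{\bar x(s) - x^*(s)} \le \eps$ for every sequence $s$. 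The reach probability the lemma attaches to an information set is exactly the value of that information set's parent sequence, so it suffices to show $\abs{x'(s) - x^*(s)} \le \eps d\sqrt n$ for every sequence $s$ at depth $d$. The base case $d=0$ is immediate, since $x'(\emptyset) = x^*(\emptyset) = 1$.

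For the inductive step, fix an information set $I$ whose parent sequence $\sigma_I$ has depth $d-1$, let $k_I$ be the number of actions at $I$, and let $Z_I := \sum_a \bar x(I,a)$ be the renormalizing constant the scheme uses at $I$. The first ingredient is that $Z_I$ is close to the ``correct'' total $x^*(\sigma_I) = \sum_a x^*(I,a)$: by Cauchy--Schwarz, $\abs{Z_I - x^*(\sigma_I)} \le \sum_a \abs{\bar x(I,a) - x^*(I,a)} \le \sqrt{k_I}\,\norm{\bar x - x^*}_2 \le \sqrt n\,\eps$, using $k_I \le n$. Suppose first $Z_I > 0$, so that $x'(I,a) = \bar x(I,a)\,x'(\sigma_I)/Z_I$, and I would decompose
\[
x'(I,a) - x^*(I,a) = \frac{\bar x(I,a)}{Z_I}\bigl(x'(\sigma_I) - x^*(\sigma_I)\bigr) + \left(\frac{\bar x(I,a)}{Z_I}\,x^*(\sigma_I) - x^*(I,a)\right).
\]
Because $0 \le \bar x(I,a) \le Z_I$, the ratio $\bar x(I,a)/Z_I$ lies in $[0,1]$, so the induction hypothesis bounds the first term in magnitude by $\abs{x'(\sigma_I) - x^*(\sigma_I)} \le \eps(d-1)\sqrt n$. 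Rewriting the second term as $\bigl(\bar x(I,a) - x^*(I,a)\bigr) - \tfrac{\bar x(I,a)}{Z_I}\bigl(Z_I - x^*(\sigma_I)\bigr)$, its magnitude is at most $\abs{\bar x(I,a) - x^*(I,a)} + \abs{Z_I - x^*(\sigma_I)} \le \eps + \sqrt n\,\eps$ by the first ingredient. Summing and folding the harmless additive $\eps$ into the (deliberately loose) bound gives $\abs{x'(I,a) - x^*(I,a)} \le \eps d\sqrt n$ for the depth-$d$ sequence $(I,a)$.

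It remains to handle the degenerate case $Z_I = 0$, in which the scheme sets $x'(I,a) = x'(\sigma_I)/k_I$. Then $\bar x(I,a) = 0$, hence $x(I,a) \le 0$, for every $a$, so $0 \le x^*(I,a) \le x^*(I,a) - x(I,a) = \abs{x^*(I,a) - x(I,a)}$; squaring and summing over $a$ yields $\sum_a x^*(I,a)^2 \le \norm{x - x^*}_2^2 \le \eps^2$, hence $x^*(\sigma_I) = \sum_a x^*(I,a) \le \sqrt{k_I}\,\eps \le \sqrt n\,\eps$ by Cauchy--Schwarz again. Therefore $0 \le x^*(I,a) \le x^*(\sigma_I) \le \sqrt n\,\eps \le \eps d\sqrt n$ and, using the induction hypothesis, $0 \le x'(I,a) \le x'(\sigma_I) \le \abs{x'(\sigma_I) - x^*(\sigma_I)} + x^*(\sigma_I) \le \eps(d-1)\sqrt n + \sqrt n\,\eps = \eps d\sqrt n$, so $\abs{x'(I,a) - x^*(I,a)} \le \eps d\sqrt n$. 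This closes the induction, and the information-set statement follows immediately.

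The one point that needs care is the nondegenerate step: a priori the rescaling factor $x'(\sigma_I)/Z_I$ threatens to blow up when $Z_I$ is tiny, but this is neutralized by the trivial observation that $\bar x(I,a) \le Z_I$, which keeps the multiplier of every quantity in $[0,1]$; after that the whole argument is Cauchy--Schwarz bookkeeping with $k_I \le n$ plus tracking of the loose $\sqrt n$ and additive-$\eps$ constants, which the paper in any case absorbs into the $n^4\norm{A}_\infty$ factor of Theorem~\ref{thm:normalization}.
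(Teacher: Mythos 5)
Your proof is correct and follows essentially the same route as the paper's: induction on depth down the sequence-form tree, with Cauchy--Schwarz and $k \le n$ controlling the per-level drift introduced by renormalization, and the same harmless additive-$\eps$ slack that the paper also silently absorbs into its admittedly loose bound. You are in fact somewhat more careful than the paper, which neither invokes nonexpansiveness of the projection onto the positive orthant to justify that $\norm{\bar x - x^*}_2 \le \eps$ still holds, nor treats the degenerate case $Z_I = 0$, where its ``moves all the $x_{i_k}$s in the same direction'' rescaling argument breaks down.
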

	\begin{proof}
		By induction on the sequence-form strategy tree for player $x$, starting at the root. At the root node $i=0$, the claim is clearly true because $x_0 = 1$ in any feasible solution $x$. Now consider any information set with parent $x_{i_0}$ and children $x_i := (x_{i_1}, \dots, x_{i_k})$ at depth $d$. From the theorem statement, we have $\norm{x_i - x^*_i}_2 \le \eps$, and since $x^*$ is feasible, we have $\sum_{j=1}^k x_{i_k}^* = x_{i_0}^*$. It follows that $$\abs{\sum_{j=1}^k x_{i_k} - x_{i_0}'}\le \sum_{j=1}^k \abs{ x_{i_k} - x_{i_k}^*} + \abs{x'_{i_0} - x^*_{i_0}} \le \eps \sqrt{k} + \eps(d-1) \sqrt{n} \le \eps d \sqrt{n}$$ by triangle inequality and inductive hypothesis, and noting that $k \le n$. But the normalization acts by picking $x_i'$ so that $\sum_{j=1}^k x_{i_k}' = x_{i_0}'$, and it moves all the $x_{i_k}$s in the same direction; thus, each one can move by at most $\eps d \sqrt{n}$, completing the induction.
	\end{proof}
	With this lemma in hand, we now prove the theorem.
	\begin{proof}[Proof of Theorem]
		Since $d \le n$ (each depth must have at least one information set), it follows from the lemma that $\norm{x' - x^*}_2 \le \eps n^2$. But the best response function $\min_y x^T A y$ (with feasibility constraints on $y$) is a pointwise minimum of Lipschitz functions $x^T v$ for each $v = Ay$ and $y$ feasible, hence itself Lipschitz, with Lipschitz constant
		$$\max_y \norm{Ay}_2 \le \max_y \norm{Ay}_1 \le \norm{A}_1 \max_y \norm{y}_\infty = \norm{A}_1 \le n^2 \norm{A}_\infty.$$
		where $\norm{A}_1$ is the sum of the magnitudes of the nonzero entries of $A$. The desired theorem follows.
	\end{proof}
	\par~
	\section{Another Example of the Utility of Sparse Factorization}
	\begin{exampl}\label{ex:easy}
		Let $A$ be the $n \times (n+1)$ matrix given by $A = \mqty[I_n & 0] + \mqty[0 & I_n]$, where $I_n$ is the $n \times n$ identity, and $0$ is a column vector of zeros. So, $A$ is the matrix whose $(i, j)$ entry is $1$ exactly when $j = i$ or $j = i+1$. By direct computation, the SVD of this matrix is $A = U\Sigma V^T$ where $U$ and $V$ are {\it fully dense}, and the SVD is unique (in the usual sense, that is, up to signs and permutations) since all the singular values are. Thus, taking an SVD would have the result of {\it increasing} the number of nonzeros from $2n$ to $\Theta(n^2)$, which is the opposite of what we want. Thus, although in this case there will not be a good sparse factorization, using  SVD make the problem worse.
	\end{exampl}
	\par~
	\section{Benchmark Games in Experiment 1}
	
	We tested on the following benchmark games from the literature:
	\begin{itemize}
		\item {\it Leduc poker} \citep{Southey05:Bayes} is a small variant of poker, played with one hole card and three community cards.
		\item {\it Battleship} \citep{Farina19:Correlation} is the classic targeting game, with two parameters: $m$ is the number of moves (shots) a player may take, and $n$ is the number of ships on the board. All ships have length 2. A player scores a point only for sinking a full ship.
		\item {\it Sheriff} \citep{Farina19:Correlation} is a simplified Sheriff of Nottingham game, modified to be zero-sum, played between a {\it smuggler} and a {\it sheriff}. The smuggler selects a {\it bribe amount} $b \in [0, B]$ and a number of illegal items $n \in [0, N]$ to try to smuggle past the sheriff. The sheriff then decides whether to inspect. If the sheriff does not inspect the cargo, then the smuggler scores $n - b$. If the sheriff inspects and finds no illegal items ($n = 0$), then the smuggler scores $3$. If the sheriff inspects, and $n > 0$, then the smuggler scores $-2n$. The smuggler has far more sequences than the sheriff in this game.
		\item {\it No-limit hold-em (NLH) river endgames} are endgames encountered by the poker-playing agent Libratus~\cite{bs18}, using the action abstraction used by Libratus. They both begin on the last betting round, when all five community cards are known. The normalization of $\norm{A}_\infty = 1$ means that in these endgames, a Nash gap of 1 corresponds to 0.075 big blinds. Due to the explicit storage of the payoff matrix in this experiment, only extremely small no-limit endgames can be tested. In particular, {\it endgame A} here is the same as {\it endgame 7} in the next experiment (with a finer abstraction), and {\it endgame B} is the same as {\it endgame A} except with the starting pot size doubled to make the game smaller.
	\end{itemize}
	
\end{document}